\documentclass[journal]{IEEEtran}
\usepackage{amsthm}
\usepackage{amssymb}
\usepackage{graphicx}
\usepackage{amsmath}
\usepackage{newtxmath}
\usepackage[cal=cm]{mathalpha} 
\usepackage{tablefootnote} 
\usepackage{threeparttable} 
\usepackage{bm}
{}
{}
{}
\newtheorem{proposition}{Proposition}{}
{}
{}

\usepackage{algorithm}
\usepackage{algpseudocode}
\usepackage{graphics}
\usepackage{epsfig}
\usepackage{multirow}
\usepackage{caption}
\usepackage{array}
\usepackage{cite}
\usepackage{stfloats}
\usepackage{subfig} 
\usepackage{booktabs}
\usepackage{comment}
\usepackage{color}
\usepackage{tabu} 
\usepackage{stfloats}
\usepackage{mathtools}
\usepackage{stmaryrd} 
\usepackage[
colorlinks=true,
linkcolor=blue,
anchorcolor=blue,
citecolor=blue,
bookmarks=true
]{hyperref}
\usepackage{etoolbox}
\makeatletter

\let\OriginalCite\cite

\newcommand{\MarkFirstCitation}[1]{%
	\ifcsundef{FirstCite@#1}{%
		\csgdef{FirstCite@#1}{1}%
		\Hy@raisedlink{%
			\hyper@anchorstart{FirstCite:#1}%
			\hyper@anchorend
		}%
	}{}%
}
\renewcommand{\cite}[2][]{%
	\ifstrempty{#1}{%
		\OriginalCite{#2}%
	}{%
		\OriginalCite[#1]{#2}%
	}%
	\forcsvlist{\MarkFirstCitation}{#2}%
}
\let\OldBibLabel\@biblabel
\def\CurrentBibKey{}

\renewcommand{\@biblabel}[1]{%
	\ifx\CurrentBibKey\@empty
	\OldBibLabel{#1}%
	\else
	\ifcsname FirstCite@\CurrentBibKey\endcsname
	\begingroup
	\hypersetup{linkcolor=black}%
	\hyperlink{FirstCite:\CurrentBibKey}{%
		\OldBibLabel{#1}%
	}%
	\endgroup
	\else
	\OldBibLabel{#1}%
	\fi
	\fi
}
\let\OldAtBibitem\@bibitem
\def\@bibitem#1{%
	\gdef\CurrentBibKey{#1}%
	\OldAtBibitem{#1}%
}

\let\OldAtLbibitem\@lbibitem
\def\@lbibitem[#1]#2{%
	\gdef\CurrentBibKey{#2}%
	\OldAtLbibitem[#1]{#2}%
}

\makeatother
\usepackage[capitalize]{cleveref} 
\crefname{figure}{Fig.}{Figs.}   
\Crefname{figure}{Fig.}{Figs.}   
\usepackage{subcaption}
\makeatletter
\let\sum\relax
\let\prod\relax
\DeclareSymbolFont{largesymbols}{OMX}{cmex}{m}{n}
\DeclareMathSymbol{\sum}{\mathop}{largesymbols}{"50}
\DeclareMathSymbol{\prod}{\mathop}{largesymbols}{"51}
\DeclareMathOperator*{\argmax}{argmax}

\DeclareMathOperator*{\diag}{\text{diag}}
\DeclareMathOperator*{\rank}{\text{rank}}

\begin{document}
	\title{{\huge Self-Calibration DOA Estimation for Movable Antenna Systems with Antenna Position Errors}}
	
	
	\author{\IEEEauthorblockN{
			Chengzhi Ye, \IEEEmembership{Student Member,~IEEE}, 
			Ruoyu Zhang, \IEEEmembership{Senior Member,~IEEE}, 
			Wen Wu, 
			\IEEEmembership{Senior Member,~IEEE},
			Byonghyo Shim,
			\IEEEmembership{Fellow,~IEEE}
		}
		\thanks{
			
			Chengzhi Ye, Ruoyu Zhang, and Wen Wu are with the Key Laboratory of Near-Range RF Sensing ICs and Microsystems (NJUST), Ministry of Education, School of Electronic and Optical Engineering, Nanjing University of Science and Technology, Nanjing 210094, China. Chengzhi Ye is also with Qian Xuesen College, Nanjing University of Science and Technology, Nanjing 210094, China (e-mail: qxsycz8166@njust.edu.cn; ryzhang19@njust.edu.cn; 
			wuwen@njust.edu.cn).
			Byonghyo Shim is with the Department of Electrical and Computer Engineering and the Institute of New Media and Communications, Seoul National University, Seoul 08826, South Korea (e-mail:
			bshim@snu.ac.kr). \textit{(Corresponding author: Ruoyu Zhang)}}
		\vspace{-2em}
	}

	
	\maketitle
	
	

	%
	\IEEEpeerreviewmaketitle

	\begin{abstract}
		In this letter, we investigate the direction-of-arrival (DOA) estimation problem for wireless sensing with movable antenna (MA) systems in the presence of unknown antenna position errors (APE). To achieve robust wireless sensing, we transform the DOA estimation problem with APE into an optimization problem via the orthogonality between the steering vector and the noise subspace. Then we propose an alternating optimization (AO)-based self-calibration estimation, which consists of two stages and iteratively estimates the APE and DOA. Specifically, in the first stage, by fixing the APE, the problem reduces to the classical DOA estimation problem, which is solved using the multiple signal classification algorithm. In the second stage, we fix the DOA to estimate the APE. By applying the Lagrange multiplier technique to the subproblem, we obtain a closed-form expression for the APE estimation.  Simulation results demonstrate the superior DOA estimation performance of the proposed self-calibration algorithm for MA systems compared to the existing approaches.
	\end{abstract}
	\begin{IEEEkeywords}
		DOA estimation, movable antenna systems, antenna position errors
	\end{IEEEkeywords}
	\section{Introduction}
	
	Direction-of-arrival (DOA) estimation is a fundamental problem in array signal processing, with extensive applications in radar sensing and wireless communications \cite{DOAHAD2022Ruoyu,ye2026RAsensing,MovingSampling2DDOA}. To achieve high-resolution parameter estimation, conventional systems typically employ fixed-position arrays (FPAs), such as uniform linear arrays and various sparse arrays \cite{ye2026tensordecompositionbasedwirelesssensing,ISACRZzhang2024}. 
	However, these traditional systems typically rely on antennas arranged with fixed spacing, which restricts the flexible acquisition of signal information in the continuous spatial domain.

	Recently, movable antenna (MA) arrays have garnered significant research interest in the fields of communications and sensing, owing to their ability to reshape channel conditions through intelligent antenna positioning \cite{zhuMovableAntennasWireless2024, ChengzhiIOTJ20262D}. 
	 In contrast to conventional FPAs, MA systems allow for continuous position adjustment, thereby unlocking the potential to fully exploit the spatial degrees of freedom within a confined region \cite{C11456856,huTwotimescaleDesignMovable2025}. 
	Various studies have demonstrated the superior wireless sensing capabilities of MA systems, showing that optimized antenna placement can significantly enhance estimation accuracy. The authors in \cite{maMovableAntennaEnhanced2024} proposed an MA-enabled wireless sensing system and optimized the antenna positions to minimize the Cram\'{e}r-Rao bound (CRB) of DOA estimation, which implies that MAs can potentially outperform conventional FPA in terms of estimation accuracy and ambiguity reduction. A tensor decomposition-based method has been proposed in \cite{zhangChannelEstimationMovableantenna2024} to efficiently reconstruct the wireless channel between arbitrary Tx/Rx MA positions by leveraging a two-stage successive antenna movement pattern.
In the realm of wireless communications, MAs have also been extensively explored to enhance the channel capacity and reliability by fully exploiting spatial diversity \cite{NearMultiuser2025,S11363456, ZhengRotatableComm2026, Zheng2025RotatableOpportunities}.
	Despite the potential mentioned above in both wireless sensing and communications, existing studies typically rely on idealized antenna array models, assuming that the received signals are free of practical array imperfections.

	Over the past decades, extensive research efforts have been made to mitigate array imperfections in conventional FPA, such as mutual coupling and gain/phase errors \cite{MutualCoupling2016,YaoleiTensor2025,liuEigenstructureMethodEstimating2011,DOAGPEChengzhi2025}. Apart from these impairments, antenna position errors (APE) can also significantly degrade the DOA performance \cite{FLANAGAN20012201,SongPositionSmall2024,ArrayPosition2025}. To address this issue, a joint estimation framework based on a
	generalized incoherently distributed source model has been proposed
	in \cite{ArrayPosition2025} simultaneously recovering DOA, which is applicable
	under small APE and FPAs. In MA systems, antennas often undergo rapid movement to adapt to dynamic channel environments, inevitably resulting in significant APE, which renders current methods inapplicable.

	To tackle the challenge of DOA estimation  with APE in MA systems, we propose a novel alternative optimization (AO)-based self-calibration algorithm estimating both DOA and APE. To be specific, we calculate the covariance matrix of the received signals and perform eigenvalue decomposition to obtain the signal subspace and the noise subspace. Based on the orthogonality of the steering vectors and the noise subspace, we formulate the joint estimation problem of DOA and APE, and develop an AO-based self-calibration algorithm to estimate
	the DOA and APE in two stages. In the first stage, we fix the APE such that the problem reduces to the conventional DOA estimation problem, which can be solved by the multiple signal classification (MUSIC) algorithm. In the second stage, we fix the DOA to estimate the APE such that the problem is transformed into a linearly constrained quadratic minimization problem, which can be
	subsequently solved through the Lagrange multiplier method
	to estimate the APE. Furthermore, we provide a thorough analysis of the computational complexity for the proposed AO-based self-calibration algorithm. Comprehensive numerical simulations demonstrate the superior DOA estimation accuracy and robustness of the proposed approach compared to conventional MUSIC methods using either uncalibrated full or limited calibrated antennas.
	
	Notations: Symbols for vectors (lower case) and matrices (upper case) are in boldface. 
	The symbols $\circledast$ and $\dagger $ denote the Hadamard product and pseudoinverse, respectively. The $T$-dimensional vector with all elements equal to 1 is written as $\bm{1}_{T \times 1}$. The expression $[\bm{\alpha}]_k$ denotes the $k$-th element of $\bm{\alpha}$. The expression $[\bm{B}]_{m,:}$ and $[\bm{B}]_{:,k}$
	denote the $m$-th row and the $k$-th column of
	$\bm{B}$, respectively.
	\section{System Model and Problem Formulation}
	
	\begin{figure}[t]
		\vspace{-0.5em}
		\centering
		\includegraphics[width=0.85\linewidth]{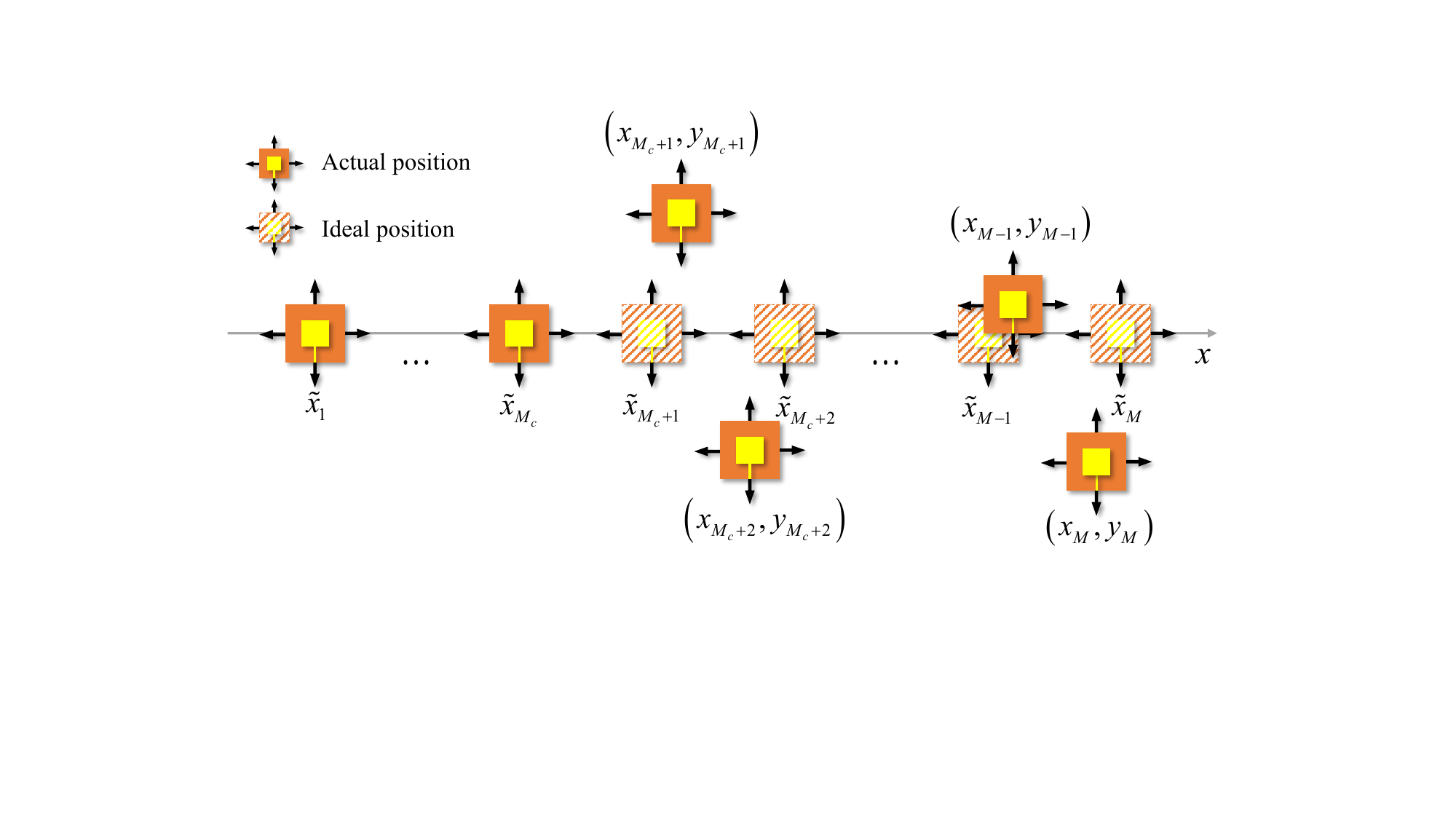}
		\caption{The diagram of the APE model in MA systems.}
		\label{PE}
		\vspace{-0.75em}
	\end{figure}
	
	In this section, we consider a DOA estimation model with APE caused by the moving of MA. We assume $K$ uncorrelated narrowband source signals impinging on an array equipped with $M$ MAs, whose normal positions are expressed as $\bm{\tilde{P}} =[\bm{\tilde{x}}, \bm{0}_{M\times 1}]$ with $\bm{\tilde{x}}=[\tilde{x}_1,\tilde{x}_2,\ldots,\tilde{x}_M]^{\mathrm{T}}$. Here, $\tilde{x}_m \in [0, H]$ denotes the ideal position of the $m$-th MA within a movable region of length $H$. 
	As shown in \Cref{PE}, the APE of MAs can be denoted by
	\begin{align}
		{\Delta\bm{P}}=[ \Delta \bm{x},\Delta \bm{y}]\in \mathbb{C}^{M\times 2},
	\end{align}
	where $\Delta \bm{x}=[\Delta x_1,\Delta x_2,\ldots,\Delta x_M]^{\mathrm{T}}$ and $\Delta \bm{y}=[\Delta y_1,\Delta y_2,\ldots,\Delta y_M]^{\mathrm{T}}$ represent the position error along the $x$ and $y$-axis, respectively. Without loss of generality, the first $M_c$ antennas are calibrated, i.e. $\Delta x_m=0$ for $m=1,\ldots,M_c$ and $\Delta y_m=0$ for $m=1,\ldots,M_c$. 
	Thus, the actual MA position is $\bm{{P}}=[\bm{{x}},\bm{{y}}]\in \mathbb{C}^{M\times 2}$, where $\bm{{x}}=[{x}_1,{x}_2,\ldots,{x}_M]^{\mathrm{T}}$, $\bm{{y}}=\Delta \bm{y}$, and ${x}_m=\tilde{x}_m+\Delta x_m$ for $m=1,\ldots,M$. The $t$-th snapshot of the received signal $\bm{z}$ can be given by
	\begin{align}\label{ysignal}
		\bm{z}(t)=\sum\nolimits_{k=1}^{K}{\bm{a}(\theta_k,\bm{{P}})s_k(t)+\bm{n}(t)},
	\end{align}
	where \begin{align}
		\bm{a}(\theta_k,\bm{{P}})=&\big[e^{j\kappa({x}_1\cos \theta_k+{y}_1\sin \theta_k)},e^{j\kappa({x}_2\cos \theta_k+{y}_2\sin\theta_k)}\nonumber\\ &,\ldots,e^{j\kappa({x}_M\cos \theta_k+{y}_M\sin\theta_k)}\big]^{\mathrm{T}}
	\end{align}
	 is the actual steering vector. $\kappa$ is the wavenumber. $\theta_k \in [\vartheta_{\min}, \vartheta_{\max}]$ is the DOA of the $k$-th target. $s_k(t)$ is the $k$-th signal at $t$-th snapshot. $\bm{n}(t)\sim\mathcal{CN}(\bm{0}_{M\times 1},\sigma_n^2\bm{I}_{M\times M})$ is the additive white Gaussian noise, and $\sigma_n^2$ is the noise power. Considering $T$ snapshots, \eqref{ysignal} can be rewritten as
	\begin{align}\label{Ysignal}
		\bm{Z}=\bm{{A}}(\bm{\theta},\bm{{P}})\bm{S}+\bm{N},
	\end{align}
	where $\bm{Z}=[\bm{z}(1),\bm{z}(2),\ldots,\bm{z}(T)]\in \mathbb{C}^{M\times T}$, $\bm{{A}}(\bm{\theta},\bm{{P}})=[\bm{a}(\theta_1,\bm{{P}}),\bm{a}(\theta_2,\bm{{P}}),\ldots,\bm{a}(\theta_K,\bm{{P}})]\in \mathbb{C}^{M\times K}$, and $\bm{S}=[\bm{s}(1),\bm{s}(2),\ldots,\bm{s}(T)]\in \mathbb{C}^{K\times T}$ with $\bm{s}(t)=[s_1(t),s_2(t),\ldots,s_K(t)]^{\mathrm{T}}\in \mathbb{C}^{K\times 1}$.
	Subsequently, we can obtain the covariance matrix ${\bm{R}_Z}$ of the received $\bm{Z}$ from \eqref{Ysignal}, i.e.
	\begin{align}\label{RY}
		\bm{{R}}_Z=\mathbb{E}\{\bm{Z}\bm{Z}^{\mathrm{H}}\}=\bm{{A}}(\bm{\theta},\bm{{P}})\bm{R}_S\bm{{A}}(\bm{\theta},\bm{{P}})^{\mathrm{H}}+\sigma_n^2\bm{I}_{M\times M},
	\end{align}
	where $\bm{R}_S=\mathbb{E}\{\bm{S}\bm{S}^{\mathrm{H}}\}=\diag(\sigma_1^2,\sigma_2^2,\ldots,\sigma_K^2)$ represents the transmitted signal covariance matrix with $\sigma_k^2$ denoting the $k$-th power of the signals. By applying the eigenvalue decomposition, we have
	\begin{align}\label{EVD}
		\bm{{R}}_Z=\sum\nolimits_{m=1}^M\xi_m\bm{e}_m\bm{e}_m^{\mathrm{H}}=\bm{E}_S\bm{\varXi}_S\bm{E}_S^{\mathrm{H}}+\bm{E}_N\bm{\varXi}_N\bm{E}_N^{\mathrm{H}},
	\end{align}
	where $\bm{\varXi}=\diag(\xi_1,\xi_2,\ldots,\xi_M) \in \mathbb{C}^{M\times M}$, with $\xi_m$ denoting the $m$-th largest eigenvalue and $\bm{e}_m$ denoting the corresponding eigenvector. $\bm{E}_S=[\bm{e}_1,\bm{e}_2,\ldots,\bm{e}_K] \in \mathbb{C}^{M\times K}$ and  $\bm{E}_N=[\bm{e}_{K+1},\bm{e}_{K+2},\ldots,\bm{e}_M]\in \mathbb{C}^{M\times (M-K)}$ represent the signal subspace and noise subspace, respectively. Since the signal subspace is orthogonal to the noise subspace, we have
	\begin{align}\label{org}
		\|\bm{E}_N^{\mathrm{H}}{\bm{a}}(\theta_k,\bm{{P}})\|_F^2=0,\enspace k=1,2,\ldots,K.
	\end{align}
	By exploiting the elementary identity $\|\bm{b}\|_F^2=\bm{b}^{\mathrm{H}}\bm{b}$, the optimization problem can be formulated as
	\begin{subequations}\label{P1}
		\begin{align}
			\mathrm{(P1)}\enspace \min_{\theta_k,\bm{{P}}}\enspace&\bm{a}(\theta_k,\bm{{P}})^{\mathrm{H}}\bm{E}_N\bm{E}_N^{\mathrm{H}} \bm{a}(\theta_k,\bm{{P}})\\
			\mathrm{s.t.} \enspace &\Delta x_m = 0, \enspace m = 1, \ldots, M_c\\
			&\Delta y_m = 0, \enspace m = 1, \ldots, M_c.
		\end{align}
	\end{subequations}

	Our goal is to estimate both the DOA and the APE of the MA systems for the subsequent mechanical calibration.
	However, the mobility of MAs introduces uncertainty regarding their positions, causing a DOA estimation ambiguity. To address this issue, we will propose an AO-based self-calibration algorithm.
	
	\section{Proposed AO-Based Self-Calibration Algorithm}
	
	In this section, we propose an AO-based self-calibration algorithm to jointly estimate the DOA and the APE. In the first stage, we fix the APE and then estimate the DOA via the MUSIC algorithm. In the second stage, we fix the DOA and calculate the APE by applying the Lagrange multiplier method. These two stages are executed iteratively until the predefined convergence criterion is met. We also analyze the computational complexity of the proposed algorithm.
	
\subsection{Stage 1: DOA Estimation}
	In this subsection,  we focus on estimating the DOA while fixing the APE. The minimization problem (\hyperref[P1]{P1}) can reduce to
	\begin{align}\label{P2}
		\mathrm{(P2)}\enspace\min_{\theta_k}\enspace \bm{a}(\theta_k,\bm{\hat{{P}}})^{\mathrm{H}}\bm{E}_N\bm{E}_N^{\mathrm{H}} \bm{a}(\theta_k,\bm{\hat{{P}}}),
	\end{align}
	where $\bm{\hat{{P}}} = \bm{\tilde{P}}+\Delta \hat{\bm{P}}$ represents the estimated actual MA position with $\Delta \hat{\bm{P}}$ denoting the estimated APE. We can transform the continuous minimization problem into a one-dimensional exhaustive search via MUSIC algorithm, i.e.,
	\begin{align}\label{Initial}
		\hat{\theta}_k = \argmax_{\vartheta\in [\vartheta_{\min}, \vartheta_{\max}]} {1}/{\big[\bm{a}(\vartheta,\bm{\hat{{P}}})^{\mathrm{H}}\bm{E}_N\bm{E}_N^{\mathrm{H}} \bm{a}(\vartheta,\bm{\hat{{P}}})\big]},
	\end{align}
	where $\hat{\theta}_k$ denotes the estimated DOA of the $k$-th target.
	
	\subsection{Stage 2: APE Estimation}
	In this subsection, with a fixed DOA estimation, we focus on estimating the APE. By substituting the estimated DOA, the objective function can be simplified as \eqref{min}. The first equation holds true based on the relationship in \eqref{Hadamard}. The third equation holds true by employing the identity $\bm{b}\circledast \bm{c}=\diag(\bm{c})\bm{b}$. The fourth equation holds true by defining $\bm{Q}(\hat{\theta}_k,\bm{\tilde{P}})=\big\{\diag\big[\bm{a}(\hat{\theta}_k,\bm{{P}})\big]\big\}^{\mathrm{H}} \bm{E}_N\bm{E}_N^{\mathrm{H}} \diag\big[\bm{a}(\hat{\theta}_k,\bm{{P}})\big] \succeq \bm{0}_{M \times M}\in \mathbb{C}^{M\times M}$, respectively. 
	\begin{figure*}
		\vspace{-0.75em}
		\begin{align}\label{min}
			&\bm{a}(\hat{\theta}_k,\bm{{P}})^{\mathrm{H}}\bm{E}_N\bm{E}_N^{\mathrm{H}} \bm{a}(\hat{\theta}_k,\bm{{P}})
			=\big[\bm{{a}}(\hat{\theta}_k,\bm{\tilde{P}})\circledast\bm{a}(\hat{\theta}_k,{\Delta\bm{P}})\big]^{\mathrm{H}} \bm{E}_N\bm{E}_N^{\mathrm{H}} \big[\bm{a}(\hat{\theta}_k,\bm{\tilde{P}})\circledast\bm{a}(\hat{\theta}_k,{\Delta\bm{P}})\big]\nonumber\\
			=&\big(\diag\big[\bm{a}(\hat{\theta}_k,\bm{\tilde{P}})\big]\bm{a}(\hat{\theta}_k,{\Delta\bm{P}})\big)^{\mathrm{H}} \bm{E}_N\bm{E}_N^{\mathrm{H}} \big(\diag\big[\bm{a}(\hat{\theta}_k,\bm{\tilde{P}})\big]\bm{a}(\hat{\theta}_k,{\Delta\bm{P}})\big)\nonumber\\
			=&\bm{a}(\hat{\theta}_k,{\Delta\bm{P}})^{\mathrm{H}}\big(\diag\big[\bm{a}(\hat{\theta}_k,\bm{\tilde{P}})\big]\big)^{\mathrm{H}} \bm{E}_N\bm{E}_N^{\mathrm{H}} \diag\big[\bm{a}(\hat{\theta}_k,\bm{\tilde{P}})\big]\bm{a}(\hat{\theta}_k,{\Delta\bm{P}})
			=\bm{a}(\hat{\theta}_k,{\Delta\bm{P}})^{\mathrm{H}}\bm{Q}(\hat{\theta}_k,\bm{\tilde{P}})\bm{a}(\hat{\theta}_k,{\Delta\bm{P}}).
		\end{align}
		\hrule
		\vspace{-0.75em}
	\end{figure*}
	\begin{figure*}
		\vspace{-0.75em}
		\begin{align}\label{Hadamard}
			\!\!\!\bm{a}(\hat{\theta}_k,\!\bm{{P}})\!=&\big[e^{j\kappa[(\tilde{x}_1+\Delta x_1)\cos \hat{\theta}_k+\Delta{y}_1\sin \hat{\theta}_k]},\ldots,e^{j\kappa[(\tilde{x}_M+\Delta x_M)\cos \hat{\theta}_k+\Delta y_M\sin\hat{\theta}_k]}\big]^{\mathrm{T}}\\
			\!=&\big[e^{j\kappa\tilde{x}_1\cos\theta_k},\ldots,e^{j\kappa\tilde{x}_M\cos\theta_k}\big]^{\mathrm{T}}\!\!\circledast\!\big[e^{j\kappa(\Delta x_1\cos\theta_k+\Delta y_1\sin\hat{\theta}_k)},\ldots,e^{j\kappa(\Delta x_M\cos\hat{\theta}_k+\Delta y_M\sin\hat{\theta}_k)}\big]^{\mathrm{T}}\!=\bm{{a}}(\theta_k,\!\bm{\tilde{P}})\!\circledast\!\bm{a}(\hat{\theta}_k,\!{\Delta\bm{P}}).\nonumber
		\end{align}
			%
		\hrule
		\vspace{-0.75em}
	\end{figure*}

	Next, we study the constraint condition as follows. It can be noted that the steering vector of APE in \eqref{Hadamard} satisfies
	\begin{align}\label{Lenear}
		[\bm{a}(\hat{\theta}_k,{\Delta\bm{P}})]_m=1, \enspace m=1,\ldots,M_c.
	\end{align}
	To facilitate computation, we introduce the coefficient matrix $\bm{W}\in \mathbb{C}^{M\times M_c}$, which can be given by
	\begin{align}
		\bm{W}=\begin{bmatrix}
			\bm{I}_{M_c\times M_c}\\
			\bm{0}_{(M-M_c)\times M_c}
		\end{bmatrix}.
	\end{align}
	Thus, the constraint condition in (\hyperref[P1]{P1}) can be rewritten in a matrix form, i.e.,
	\begin{align}\label{condition}
		\bm{W}^{\mathrm{H}}\bm{a}(\hat{\theta}_k,{\Delta\bm{P}})=\bm{1}_{M_c\times 1}
	\end{align}
	Using \eqref{condition}, the minimization problem \eqref{min} can be rewritten as
\begin{subequations}\label{P3}
		\begin{align}
		\mathrm{(P3)}\enspace\min_{{\Delta\bm{P}}} \enspace& \bm{a}(\hat{\theta}_k,{\Delta\bm{P}})^{\mathrm{H}}\bm{Q}(\hat{\theta}_k,\bm{\tilde{P}})\bm{a}(\hat{\theta}_k,{\Delta\bm{P}}) \\ 
		\text{s.t.}\enspace& 	\bm{W}^{\mathrm{H}}\bm{a}(\hat{\theta}_k,{\Delta\bm{P}})=\bm{1}_{M_c\times 1}
	\end{align}
\end{subequations}
	It should be noted that the minimization problem (\hyperref[P3]{P3}) is non-convex with respect to $\Delta\bm{P}$ but translates into a convex optimization problem with respect to $\bm{a}(\hat{\theta}_k,{\Delta\bm{P}})$. This convexity enables the application of the Lagrange multiplier method. However, the existence of a direct and unique closed-form solution inherently depends on the rank of $\bm{Q}(\hat{\theta}_k,\bm{\tilde{P}})$. We subsequently introduce Proposition \ref{Propose1} to analyze the rank of $\bm{Q}(\hat{\theta}_k,\bm{\tilde{P}})$.
	\begin{proposition}\label{Propose1}
		For $k=1,2,\ldots,K$, $\bm{Q}(\hat{\theta}_k ,\bm{\tilde{P}})$ is a singular matrix with a rank of $M-K$.
	\end{proposition}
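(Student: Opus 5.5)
The plan is to factor $\bm{Q}(\hat{\theta}_k,\bm{\tilde{P}})$ as $\bm{D}^{\mathrm{H}}\bm{E}_N\bm{E}_N^{\mathrm{H}}\bm{D}$, where $\bm{D}=\diag[\bm{a}(\hat{\theta}_k,\bm{\tilde{P}})]$. The rank then follows from two facts: every diagonal entry of $\bm{D}$ is a unit-modulus exponential, so $\bm{D}$ is unitary, and $\bm{E}_N\bm{E}_N^{\mathrm{H}}$ is the orthogonal projector onto the noise subspace in \eqref{EVD}. In the definition of $\bm{Q}$ the diagonal matrix is written with $\bm{P}$ rather than $\bm{\tilde{P}}$. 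Either reading gives a unitary diagonal matrix, since each entry has the form $e^{j\kappa(\cdot)}$ with real exponent, so the argument is the same in both cases.

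\textbf{Step 1: $\bm{D}$ is unitary.} Every entry of $\bm{D}$ has modulus one, so $\bm{D}^{\mathrm{H}}\bm{D}=\bm{D}\bm{D}^{\mathrm{H}}=\bm{I}_{M\times M}$. Hence $\bm{Q}=\bm{D}^{\mathrm{H}}(\bm{E}_N\bm{E}_N^{\mathrm{H}})\bm{D}$ is a unitary similarity transform of $\bm{E}_N\bm{E}_N^{\mathrm{H}}$. The two matrices therefore have the same eigenvalues and the same rank.

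\textbf{Step 2: rank of the projector.} Since \eqref{EVD} comes from the eigendecomposition of the Hermitian matrix $\bm{R}_Z$, the columns $\bm{e}_{K+1},\ldots,\bm{e}_M$ are orthonormal, so $\bm{E}_N^{\mathrm{H}}\bm{E}_N=\bm{I}_{M-K}$. It follows that $\bm{E}_N\bm{E}_N^{\mathrm{H}}$ is idempotent and Hermitian, with eigenvalue $1$ of multiplicity $M-K$ and eigenvalue $0$ of multiplicity $K$. Equivalently, $\rank(\bm{E}_N\bm{E}_N^{\mathrm{H}})=\rank(\bm{E}_N)=M-K$. Combining this with Step 1 gives $\rank(\bm{Q})=M-K$.

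\textbf{Step 3: singularity and the null space.} Because $K\ge 1$, we have $M-K<M$, so $\bm{Q}$ is singular. The null space can be made explicit: $\bm{Q}\bm{v}=\bm{0}$ if and only if $\bm{E}_N^{\mathrm{H}}\bm{D}\bm{v}=\bm{0}$, that is, $\bm{v}\in\bm{D}^{\mathrm{H}}\,\mathrm{span}(\bm{E}_S)$, which has dimension $K$. When $\hat{\theta}_k$ equals the true DOA, \eqref{org} together with the Hadamard factorization \eqref{Hadamard} shows that the true $\bm{a}(\hat{\theta}_k,\Delta\bm{P})$ lies in this null space. This is consistent with the minimum value of (P3) being zero.

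\textbf{Main obstacle.} The algebra itself is routine. The delicate point is the assumption that the subspace split is exact, with exactly $K$ signal eigenvalues strictly larger than $\sigma_n^2$. This requires $\bm{A}(\bm{\theta},\bm{P})$ to have full column rank $K$, which needs distinct DOAs and $M>K$, and $\bm{R}_S\succ\bm{0}$. Even if this fails, the rank statement still holds, because it depends only on $\bm{E}_N$ having $M-K$ orthonormal columns by construction. I would state this explicitly in the proof and, for the sample covariance case, note that the same argument applies verbatim to the estimated $\bm{E}_N$.
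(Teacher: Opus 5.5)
Your argument is correct and is essentially the paper's proof: the paper writes $\bm{Q}=\bm{X}\bm{X}^{\mathrm{H}}$ with $\bm{X}=\bm{D}^{\mathrm{H}}\bm{E}_N$, applies $\rank(\bm{X}\bm{X}^{\mathrm{H}})=\rank(\bm{X})$, and uses the nonsingularity of the diagonal factor to conclude $\rank(\bm{Q})=\rank(\bm{E}_N)=M-K$. Your use of the stronger fact that $\bm{D}$ is unitary, so that $\bm{Q}$ is a unitary similarity of the projector $\bm{E}_N\bm{E}_N^{\mathrm{H}}$, is a minor variation that additionally gives the spectrum of $\bm{Q}$ (eigenvalue $1$ with multiplicity $M-K$, eigenvalue $0$ with multiplicity $K$) and its explicit $K$-dimensional null space.
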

	\begin{proof}
		Based on \eqref{min}, we can deduce the rank of $\bm{Q}(\hat{\theta}_k ,\bm{P})$ as
		\begin{align}
			&\rank\big[\bm{Q}(\hat{\theta}_k ,\bm{\tilde{P}})\big]\nonumber\\
			=&\rank \Big[\big\{\diag\big[\bm{a}(\hat{\theta}_k ,\bm{\tilde{P}})\big]\big\}^{\mathrm{H}} \bm{E}_N\bm{E}_N^{\mathrm{H}} \diag\big[\bm{a}(\hat{\theta}_k ,\bm{\tilde{P}})\big]\Big]\nonumber\\
			=&\rank\Big[\big\{\diag\big[\bm{a}(\hat{\theta}_k ,\bm{\tilde{P}})\big]\big\}^{\mathrm{H}} \bm{E}_N\Big]
			=M-K,
		\end{align}
		where the second equality follows from the fundamental identity $\rank(\bm{XX}^{\mathrm{H}})=\rank(\bm X)$, and the final equality holds strictly because multiplying the matrix $\bm{E}_N$ by a non-singular diagonal matrix $\big\{\diag\big[\bm{a}(\hat{\theta}_k ,\bm{\tilde{P}})\big]\big\}^{\mathrm{H}}$ preserves its rank with $\rank(\bm{E}_N)=M-K$.
	\end{proof}
	Proposition \ref{Propose1} demonstrates that $\bm{Q}(\hat{\theta}_k ,\bm{\tilde{P}})$ experiences a rank deficiency of $K$. This deficiency essentially hinders the direct computation of the inverse matrix typically required in the standard Lagrange multiplier method. To resolve this invertibility issue, we introduce a slight diagonal perturbation
	\begin{align}
		\bm{\bar{Q}}(\hat{\theta}_k ,\bm{\tilde{P}})=\bm{Q}(\hat{\theta}_k ,\bm{\tilde{P}})+\varepsilon\bm{I}_{M \times M},
	\end{align}
	where $\varepsilon$ is a sufficiently small positive constant. This perturbation maintains the inherent matrix structure while simultaneously guaranteeing stable numerical matrix inversion.
	
	Let $\mathcal{L} \big[\bm{a}(\hat{\theta}_k,{\Delta\bm{P}}),\bm{\mu}\big]$ be the Lagrange function, which can be formulated as
	\begin{align}\label{Lag}
		\mathcal{L} \big[\bm{a}(\hat{\theta}_k,{\Delta\bm{P}}),\bm{\mu}\big]=&\bm{a}(\hat{\theta}_k,{\Delta\bm{P}})^{\mathrm{H}}\bm{\bar{Q}}(\hat{\theta}_k ,\bm{\tilde{P}})\bm{a}(\hat{\theta}_k,{\Delta\bm{P}})\nonumber\\
		&+\bm{\mu}^{\mathrm{H}}\big\{\bm{W}^{\mathrm{H}}\bm{a}(\hat{\theta}_k,{\Delta\bm{P}})-\bm{1}_{M_c\times 1}\big\},
	\end{align}
	where $\bm{\mu}$ denotes the Lagrange multiplier vector. By taking the partial derivative of the Lagrange function with respect to $\bm{a}(\hat{\theta}_k,{\Delta\bm{P}})$ and equating it to zero, we obtain
	\begin{align}
		\frac{\partial \mathcal{L} }{\partial \bm{a}(\hat{\theta}_k,{\Delta\bm{P}})}= 2\bm{\bar{Q}}(\hat{\theta}_k ,\bm{\tilde{P}})\bm{a}(\hat{\theta}_k,{\Delta\bm{P}})+\bm{W}\bm{\mu} = \bm{0}_{M \times 1}.
	\end{align}
	This equation yields the optimal estimate $\bm{\hat{{a}}}(\hat{\theta}_k,{\Delta\bm{P}})$ given as
	\begin{align}\label{ahat}
		\bm{\hat{{a}}}(\hat{\theta}_k ,{\Delta\bm{P}})=-\bm{\bar{Q}}^{-1}{(\hat{\theta}}_k ,\bm{\tilde{P}})\bm{W}\bm{\mu}/2,
	\end{align}
	where $\bm\mu=-2\big[\bm{W}^{\mathrm{H}}\bm{\bar{Q}}(\hat{\theta}_k ,\bm{\tilde{P}})^{-1}\bm{W}\big]^{-1}\bm{1}_{M_c\times 1}$ is obtained by substituting \eqref{ahat} back into \eqref{condition}.
	
	Based on the phase information of the estimation, we can deduce the relationship for the $k$-th DOA of the $m$-th MA as
	\begin{align}\label{only}
		[\bm{\hat{\varTheta}} ]_{:,k}^{\mathrm{T}}[{\Delta \bm{\hat{P}}}]_{m,:}^{\mathrm{T}}=\kappa^{-1}\measuredangle [\bm{\hat{{a}}}(\hat{\theta}_k,{\Delta\bm{P}})]_m,
	\end{align}
	where $[\bm{\hat{\varTheta}} ]_{:,k}=[\cos \hat{\theta}_k ,\sin \hat{\theta}_k ]^{\mathrm{T}}$. It is worth noting that the APE in practical systems are intrinsically smaller than the operating wavelength. This physical reality naturally confines the resulting phase shift $\measuredangle [\bm{\hat{{a}}}(\hat{\theta}_k,{\Delta\bm{P}})]_m$ within the principal branch $(-\pi, \pi]$. This means the issue of phase wrapping ambiguity issue does not exist in practice. By concatenating the equations for all $K$ DOAs defined by $\bm{\hat{\theta}} =[\hat{\theta}_1 ,\hat{\theta}_2 ,\ldots,\hat{\theta}_K ]^{\mathrm{T}}$ and applying the least squares method, we arrive at
	\begin{align}\label{all}
		[{\Delta \bm{\hat{P}}}]_{m,:}^{\mathrm{T}}=\kappa^{-1} (\bm{\hat{\varTheta}}^{\mathrm{T}})^{\dagger} \measuredangle \bm{\hat{{b}}}(\bm{\hat{\theta}},{\Delta\bm{P}}_{m,:}),
	\end{align}
	where $\bm{\hat{{b}}}(\bm{\hat{\theta}},{\Delta\bm{P}}_{m,:})=\big[[\bm{\hat{{a}}}({\hat{\theta}}_1,{\Delta\bm{P}})]_m,\ldots,[\bm{\hat{{a}}}({\hat{\theta}}_K,{\Delta\bm{P}})]_m\big]^{\mathrm{T}}$. It is crucial to emphasize that estimating the array position errors along both the $x$ and $y$ axes requires the condition $K \geq 2$. For the special case of $K=1$, the formulation simplifies to estimating merely the gain/phase errors which has been thoroughly investigated in \cite{DOAGPEChengzhi2025}. Subsequently, we update the actual positions of MAs via $\bm{\hat{ P}}={\Delta \bm{\hat{P}}}+\bm{\tilde{P}}$.
	
	The estimated DOA $\bm{\hat{\theta}}^{(l)}$ at the $l$-th iteration directly corresponds to the $K$ principal maxima of $G$. The convergence criterion for the proposed algorithm is defined as $\|\bm{\hat{\theta}}^{(l)}-\bm{\hat{\theta}}^{(l-1)}\|_F^2\leq \delta$. Specific steps of the proposed AO-based self-calibration algorithm are summarized in Algorithm \ref{Algorithm 1}.
	\begin{algorithm}[t]
		\renewcommand{\algorithmicrequire}{\textbf{Input }}  
		\renewcommand{\algorithmicensure}{\textbf{Output }}  
		\caption{Proposed AO-Based Self-Calibration Algorithm}
		\label{Algorithm 1}
		\begin{algorithmic}[1]
			\Require 
			$\bm{Z}$, $T$, $K$, $H$, $M$, and $M_c$.
			\State Calculate the noise subspcace $\bm{E}_N$ via \eqref{RY} and \eqref{EVD};
			\State Construct the constraint condition via \eqref{condition};
			\While {$l\leq L$ or $\|\bm{\hat{\theta}}^{(l)}-\bm{\hat{\theta}}^{(l-1)}\|_F^2\geq \delta$}
			
			\Statex {\textbf{Stage 1: DOA Estimation}}
			\State Calculate the estimated DOA $\bm{\theta}^{(l)}$ via \eqref{Initial}.
			\Statex {\textbf{Stage 2: APE Estimation}}
			\For {each $k\in [1,\ldots,K]$ }
			
			\State Estimate $\bm{\hat{{a}}}(\hat{\theta}_k,{\Delta\bm{P}})$ via \eqref{ahat};
			\State Compute the function via  \eqref{only};
			\EndFor
			\State Estimate the APE of MA $[{\Delta \bm{\hat{P}}}^{(l)}]_{m,:}$ via \eqref{all};
			\State Update the estimated actual positions of MA $\bm{\hat{{P}}}^{(l-1)}$ via $\bm{\hat{ P}}^{(l)}={\Delta \bm{\hat{P}}}^{(l)}+\bm{\tilde{P}}$;
			\State $l\leftarrow l+1$;
			\EndWhile
			\Ensure Return estimated parameters $\bm{\hat{\theta}}$ and $\bm{\hat{ P}}$.
		\end{algorithmic}
	\end{algorithm}

	\vspace{-0.5em}
	\subsection{Computational Complexity Analysis}
	In this subsection, we evaluate the computational complexity of the proposed algorithm. The initialization stage, which encompasses the covariance matrix computation, eigenvalue decomposition, and the initial MUSIC spectral search, requires $\mathcal{O}(M^2 T + M^3 + GM(M-K))$ operations, where $G$ denotes search grids, respectively. Let $L_c$ denote the number of iterations required to reach convergence. In each iteration, the processes of updating the steering vectors, calculating the Lagrange multipliers, and refining the DOA estimation collectively demand $\mathcal{O}(KM^3 + KM_c^3 + KMM_c + GM(M-K))$ operations. Consequently, the overall computational complexity of the proposed algorithm evaluates to $\mathcal{O}\big(L_c(KM^3 + KM_c^3 + KMM_c + GM(M-K))\big)$. Consequently, the total computational complexity is formulated as
	\begin{align}\label{complexity}
		\mathcal{O}\big(&M^2 T + M^3 + GM(M-K) + \notag\\
		&L_c \big[KM^3 + KM_c^3 + KMM_c + GM(M-K)\big]\big),
	\end{align}
	which can be approximated as $\mathcal{O}\big(M^2 T + L_c(KM^3 + GM^2)\big)$ given $M_c \leq M$ and $K < M$. By avoiding prohibitive multi-dimensional spectral searches, the proposed iterative method maintains an acceptable computational overhead for practical MA systems.
	
	\vspace{-0.5em}
	\section{Numerical Simulations}
	In this section, we evaluate the performance of the proposed algorithm through numerical simulations. The movable region is set to $H = 12\lambda$, the number of MAs is set to $M = 12$, the optimized MA positions are obtained in \cite{ye2026movableantennaenhancedwireless}, the number of the calibrated MAs is $M_c=7$, the number of snapshots is $T=100$, and the number of the DOA is $K = 4$, which are randomly drawn from a uniform distribution $\mathcal{U}(\vartheta_{\min}, \vartheta_{\max})$, where we set $\vartheta_{\min} = \frac{\pi}{6}$ and $\vartheta_{\max} = \frac{5\pi}{6}$. 
	The estimation results are calculated by taking the average over 95\% of the realizations for mitigating the impact of outliers \cite{ISACRZzhang2024}. The APE is generated by 
	$\Delta x_m = \sigma_x \zeta_x$, and $\Delta y_m =\sigma_y \zeta_y$. $\zeta_x$ and $\zeta_y$ are randomly drawn from a uniform distribution $\mathcal{U}(-0.5, 0.5)$, and $\sigma_x=\sigma_y=0.5\lambda$ are the standard deviations of $\zeta_x$ and $\zeta_y$. The convergence tolerance is set as $\delta = 10^{-3}$. For convenient comparison, we plot the curve of conventional MUSIC algorithm utilizing all the MAs and the calibrated MAs, which are marked as \textbf{Baseline 1} and \textbf{Baseline 2}, respectively. To fully demonstrate the superiority of the proposed algorithm, we investigate the performance in scenarios with only the $x$-axis error, only the $y$-axis error, and simultaneous $x$- and $y$-axis errors, which are marked as \textbf{Baseline 3}, \textbf{Baseline 4}, and \textbf{Baseline 5}, respectively.


\begin{figure}[t]
	\vspace{-0.5em}
	\centering
	\begin{minipage}[t]{0.45\linewidth}
		\centering
		\includegraphics[width=\linewidth]{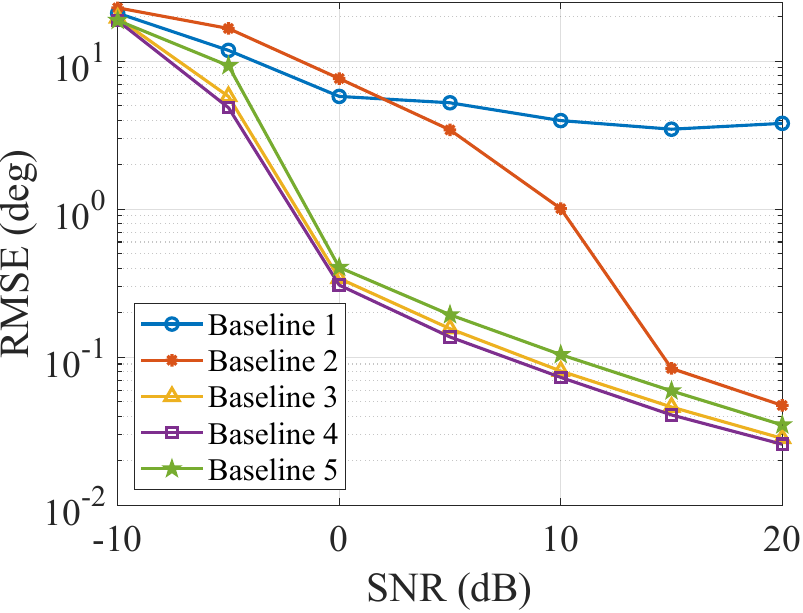}
		\caption{RMSE versus SNR.}
		\label{theta_SNR}
	\end{minipage}
	\hfill
	\begin{minipage}[t]{0.45\linewidth}
		\centering
		\includegraphics[width=\linewidth]{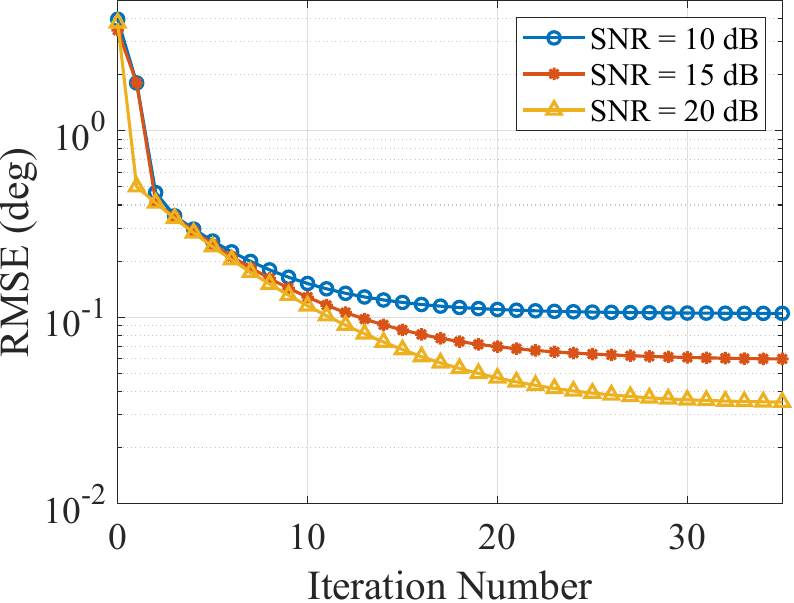}
		\caption{RMSE versus iteration numbers.}
		\label{fig_iter}
	\end{minipage}
	\vspace{-0.5em}
\end{figure}

\begin{figure}[t]
	\centering
	\begin{minipage}[t]{0.45\linewidth}
		\centering
		\includegraphics[width=\linewidth]{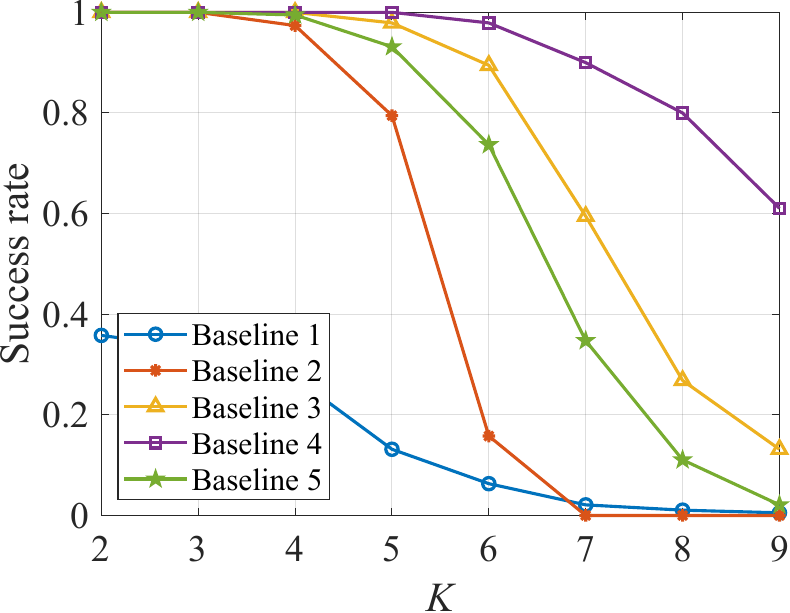}
		\caption{Success rate versus $K$.}
		\label{theta_number}
	\end{minipage}
	\hfill
	\begin{minipage}[t]{0.45\linewidth}
		\centering
		\includegraphics[width=\linewidth]{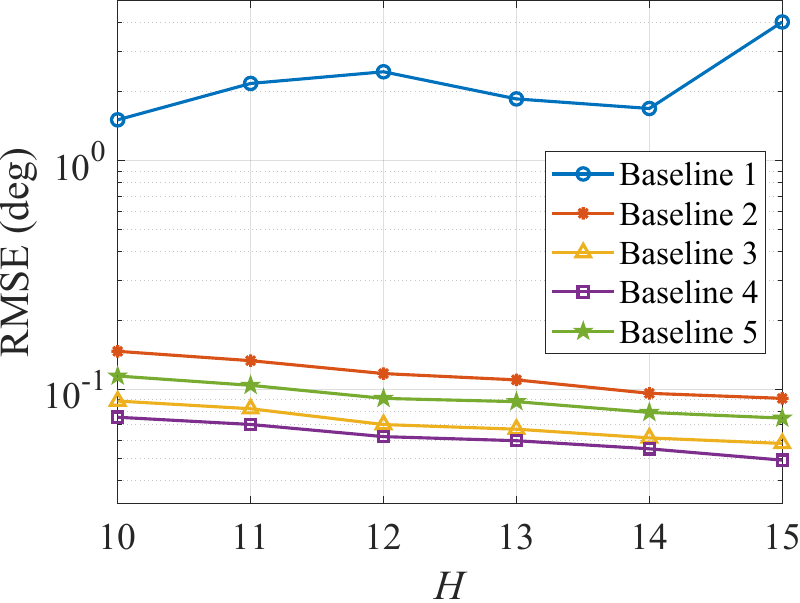}
		\caption{Success rate versus $H$.}
		\label{RMSE_H}
	\end{minipage}
	\vspace{-0.5em}
\end{figure}

\begin{figure}[t]
	\vspace{-0.5em}
	\centering
	\begin{minipage}[t]{0.45\linewidth}
		\centering
		\includegraphics[width=\linewidth]{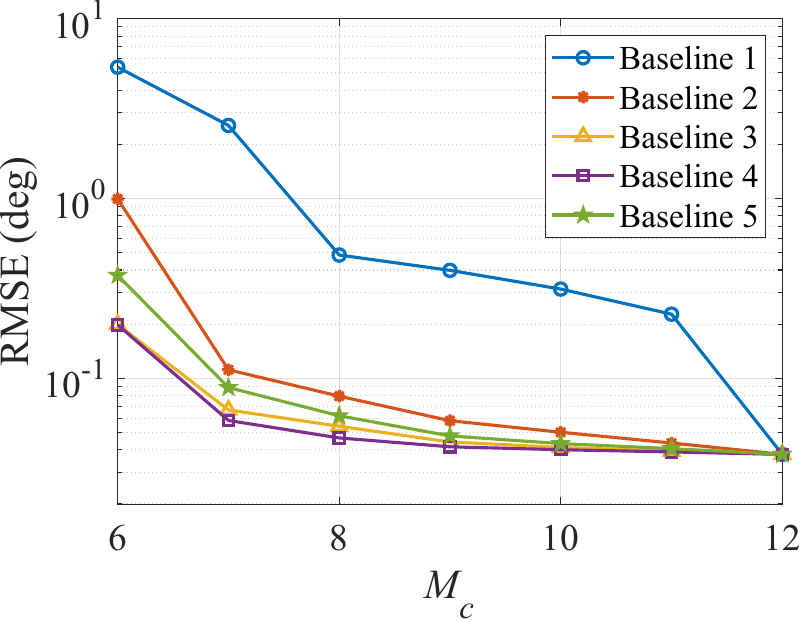}
		\caption{Success rate versus $M_c$.}
		\label{RMSE_Mc}
	\end{minipage}
	\hfill
	\begin{minipage}[t]{0.45\linewidth}
		\centering
		\includegraphics[width=\linewidth]{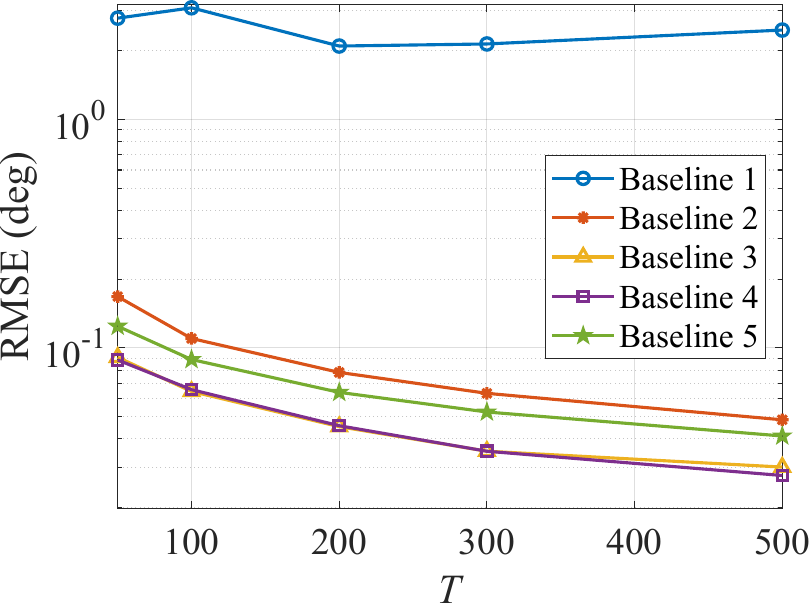}
		\caption{RMSE versus $T$.}
		\label{RMSE_T}
	\end{minipage}
	\vspace{-0.5em}
\end{figure}
	
	\Cref{theta_SNR} illustrates the variation in the RMSE of DOA estimation with respect to the SNR. In the presence of APE, the performance of the conventional MUSIC algorithm utilizing all MAs fails to achieve valid DOA estimation. In contrast, RMSE of the proposed iterative algorithms decreases significantly with SNR increases. Specifically, the performance of the proposed algorithm with errors confined to the $y$-axis outperforms the case with APE only in the $x$-axis, which in turn surpasses the configuration with errors in both the $x$ and $y$ axes. Although the RMSE of the conventional MUSIC algorithm using calibrated MAs also decreases with the increase of SNR, it consistently remains higher than that of the proposed algorithm. Furthermore, the MUSIC algorithm using only calibrated MAs demonstrates particularly poor performance in the low SNR regime.
	
	\Cref{fig_iter} demonstrates the convergence performance of the proposed Algorithm. We can observe that the DOA estimation performance of the proposed algorithm gradually improves as the number of iterations increases, which indicates that the proposed algorithm can gradually converge to a stationary solution as the number of iterations increases. Specifically, it takes approximately 30 iterations for convergence.
	
	\Cref{theta_number} illustrates the success rate of DOA estimation versus the number of sources. A target is considered successfully estimated
	if the DOA estimation error satisfies $|\theta_k-\hat{\theta}_{n,k}|\leq0.5^{\circ}$. In the presence of APE, conventional MUSIC algorithm fails to estimate the DOA. Owing to the limitation on the number of calibrated antennas, the DOA cannot be estimated using only calibrated antennas when $K\geq M_c$, and the success rate of DOA estimation drops sharply when $K\geq2$. In addition, when $2 \le K\leq6$, the proposed algorithm can maintain a high DOA estimation success rate under the cases with APE.
	
	\Cref{RMSE_H} decipts the RMSE of the evaluated algorithms in the presence of varying movable region sizes $H$ where the estimation error is shown as a function of the movement range. We can observe that the performance of the proposed algorithm improves as the movable region increases. Specifically the conventional MUSIC algorithm utilizing all MAs fails to achieve reliable estimation and fluctuates severely across the evaluated range. The proposed algorithm maintains an accurate estimation under all of the imperfect cases. 
	
	\Cref{RMSE_Mc} demonstrates the RMSE with varying numbers of calibrated antennas. We can observe that the performance of all evaluated methods improves as $M_c$ increases. Specifically, the MUSIC algorithm utilizing the all MAs suffers from severe performance degradation when the uncalibrated proportion is large and it only achieves acceptable accuracy when all antennas are perfectly calibrated. The proposed self-calibration approach effectively achieve robust DOA estimation under all imperfect cases.
	
 \Cref{RMSE_T} shows the RMSE performance varying numbers of snapshots $T$. We can observe that the DOA estimation performance of the proposed algorithm improves as the number of snapshots increases. Although the algorithm using only calibrated antennas shows improved accuracy it consistently presents a higher error bound. The proposed algorithm consistently maintains a robust accurate estimation under the cases with APE even when the number of snapshots is limited.
	
	\vspace{-0.5em}

\section{Conclusion}
In this letter, we addressed the DOA estimation problem for MA systems in the presence of unknown APE. By exploiting the orthogonality between the steering vector and the noise subspace, we formulated the joint estimation problem of DOA and APE as an optimization problem and developed an AO-based self-calibration algorithm. In each iteration, the DOA is initially updated using the MUSIC algorithm with the current APE estimate, while the APE is updated in closed-form derived via the Lagrange multiplier method. Simulation results demonstrated that the proposed algorithm significantly outperforms conventional approaches, achieving robust and accurate DOA estimation.

	
	
	%
	\vspace{-0.5em}
	\raggedbottom 
	\bibliographystyle{IEEEtran}
	\bibliography{mybib_Abbreviation_1}

\end{document}